\newtheorem{theorem}{Theorem}
\newtheorem{assumption}{Assumption}
\begin{document}
\title{\bf Moving-Horizon Estimators\\ for Hyperbolic and Parabolic  PDEs in 1-D}

\author{Luke Bhan,
        Yuanyuan Shi, 
        Iasson Karafyllis,
         Miroslav Krstic, 
         and James B. Rawlings}


\maketitle


\begin{abstract}
Observers for PDEs are themselves PDEs. Therefore, producing real time estimates with such observers is computationally burdensome. For both finite-dimensional and ODE systems, moving-horizon estimators (MHE) are operators whose output is the state estimate, while their inputs are the initial state estimate at the beginning of the horizon as well as the measured output and input signals over the moving time horizon. In this paper we introduce MHEs for PDEs which remove the need for a numerical solution of an observer PDE in real time. We accomplish this using the PDE backstepping method which, for certain classes of both hyperbolic and parabolic PDEs, produces moving-horizon state estimates explicitly. Precisely, to explicitly produce the state estimates, 
we employ a backstepping transformation of a hard-to-solve observer PDE into a target observer PDE, which is explicitly solvable. 
The MHEs we propose are not new observer designs but simply the explicit MHE realizations, over a moving horizon of arbitrary length, of the existing backstepping observers. Our PDE MHEs lack the optimality of the MHEs that arose as duals of MPC, but they are given explicitly, even for PDEs. In the paper we provide explicit formulae for MHEs for both hyperbolic and parabolic PDEs, as well as simulation results that illustrate theoretically guaranteed convergence of the MHEs. 
\end{abstract}

\section{Introduction}

\paragraph{Motivation: moving-horizon estimators for PDEs}    In this paper, we develop moving-horizon estimators for \textit{two} distinct classes of PDEs: first-order hyperbolic partial integro-differential equation (PIDE) systems and parabolic PDE systems governing reaction-diffusion processes. Our motivation stems from a well-established challenge in the field that PDE observers are expensive to compute in real time due to being PDEs themselves. Instead, we craft explicit representations of the PDE observer systems, bypassing the challenges of PDE discretization schemes like finite differences, which, when mishandled, can lead to numerical instability as in the Gibbs phenomenon. To achieve explicit representations of PDE observers, we employ the power of PDE backstepping transformations. By designing these transformations, we convert complex observer PDE systems, which cannot be analytically solved, into  target systems that are analytically solvable. We then go a step further and, rather then producing state estimates using the measured signal history from the initial time, we employ the observer PDE's explicit solution over a moving time horizon, requiring only a measurement history over a constant, most recent time interval. This represents the notion of a \textit{moving-horizon estimator} (MHE), extended from \cite{Rawlings2003} to PDE systems. The MHEs we produce are the equivalent representations of the PDE observers and maintain their guarantees of exponential convergence towards the state of the plant. Thus, in this paper, we present a blueprint for a new direction in PDE observer design---PDE MHEs---on two separate PDE benchmark systems from distinct PDE classes.  

    \paragraph{Advances in PDE observers}
        The first results in PDE observers began with \cite{SMYSHLYAEV2005613} presenting a series of observers for 1-D parabolic PDEs using backstepping transformations. Following this result, observers were developed across a variety of PDE classes and even coupled PDE-ODE systems. For example, for hyperbolic PDEs, references \cite{6160338, 6573344, BERNARD20142692} present results for both complex coupled hyperbolic PDE-ODE systems as well as adaptive observer designs. Reference \cite{LIU201661} develops observers for coupled parabolic PDEs with different diffusion coefficients which represent simplified chemical tubular reactor models. 

        Additionally, observers have been developed for targeted applications such as the Burgers equation and more challenging Navier-Stokes equations in \cite{10.1115/1.3023128, 4177273, 1583115, VAZQUEZ20082517}. State-of-charge (SOC) estimation for electrochemical models is studied in \cite{socMoura} and \cite{5208259} studies a boundary observer for 2-D thermal fluid convection. Lastly,  a prescribed-time observer for the linear Schr\"{o}dinger equation is presented in \cite{STEEVES20203}.
        

    \paragraph{Moving horizon estimators}
    For many years, MHEs have been well studied for finite dimensional systems \cite{Muske1995}, \cite{RAO20011619}, \cite{4792993},  \cite[Ch. 4]{rawlings:mayne:diehl:2020},  \cite{Rawlings2003}, \cite{Rawlings20061529}, \cite{10.1007/978-1-4471-0853-5_19}, \cite{https://doi.org/10.1002/aic.690420811}. Particularly,
    MHEs are robustly globally exponentially stable (RGES) under mild exponential assumptions \cite[S. 4.3.3]{rawlings:mayne:diehl:2020}. Additionally, reference \cite{1039801} explores and provides conditions for convergence of MHEs in hybrid systems. From an application perspective, MHEs have found success over traditional approaches such as the extended Kalman filter in networked chemical reaction systems \cite{doi:10.1021/ie034308l} \cite{KUNTZ2023108429}. Moreover, reference \cite{8685160} develops a MHE algorithm for distributed networking systems and lastly, MHEs have played an important role in developing robots for various applications ranging from underwater localization to agricultural machinery \cite{WANG20141581} \cite {7572119 }\cite{KRAUS201325}.

    However, to the author's knowledge, very little work has been explored in MHEs for PDEs without first discretizing the PDE (early lumping) and then performing well known finite dimensional MHE \cite{JANG2014159}. Recently, reference \cite{DONGMO202285} proposed a late lumping approach---but only for reaction diffusion PDEs and builds their method from gradient-based estimation rather than explicitly solving the observer PDE as done in this work.

    \paragraph{Contributions}
    
The paper's contributions are for two separate PDE classes. First, we present a novel hyperbolic PIDE observer. We then prove that the observer exponentially converges to the true system state via  Lyapunov analysis. We then convert the PDE  observer into an explicit MHE, retaining \textit{all} the convergence guarantees of the PDE  observer. We repeat the process for parabolic PDEs, leveraging some of our results from \cite{SMYSHLYAEV2005613}. Lastly, in the case of parabolic PDEs we conclude with illustrations of our result via a simulation of the explicit MHE for a slowly unstable yet sufficiently excited parabolic PDE.

\paragraph{Notation}
$\mathbb{R}^+$ is the interval $[0, +\infty)$. Let $ u: [0,1] \times \mathbb{R}^+  \rightarrow \mathbb{R} $ be given. We use $u[t]$ to denote the profile of $u$ at $t$, i.e., $(u[t])(x)=u(x,t)$ for all $x \in [0,1]$. $L^2(0, 1)$ denotes the space of equivalent classes of square-integrable, measurable functions defined on $(0, 1)$. For every $u$ in $L^2(0, 1)$, $\|u\|$ denotes the standard norm of $u$ in $L^2(0, 1)$. $H^2(0, 1)$ is the Sobolev space of functions in $L^2(0, 1)$ with weak first and second derivatives in $L^2(0, 1)$. Let $u_x(x, t)$ denote the partial derivative of the function $u$ with respect to spatial variable $x$ and $u_t(x, t)$ denote similarly for $t$. 
\section{1D Hyperbolic PIDE} \label{sec:hyperbolic}
We consider hyperbolic PIDE systems commonly found in explaining physical phenomena such as traffic flows and chemical reactors \cite{krstic2008boundary} as
\begin{eqnarray}
    \label{eq:hyperbolicSys1}
    u_t(x, t) \nonumber &=& u_x(x, t) + g(x) u(0, t) \\ && + \int_0^x f(x, y)u(y, t) dy\,, \\ 
    u(1, t) &=&  U(t) \,, \label{eq:hyperbolicSys2}\\ \label{eq:hyperbolicSys3}
    Y(t) &=& u(0, t)\,,
\end{eqnarray}
where $t \geq 0$ is the time, $x \in [0, 1]$ is the spatial coordinate, $U$ is the control and $Y$ is the measured output. Furthermore, we require the following assumption.  
\begin{assumption}\label{assmp1}
    $f \in C^1(\mathcal{T})$ and $g \in C^1([0, 1])$, where $\mathcal{T} = \{(x, y) \in \mathbb{R}^2 ; 0 \leq y \leq x \leq 1\}$.
\end{assumption}

For every $u_0 \in C([0, 1])$ and $U \in C^1(\mathbb{R}^+)$ that satisfy the compatibility conditions
\begin{eqnarray}
    u_0(1) &=& U(0) \label{eq:hyperbolicCompat1} \,, \\ 
    \dot{U}(0) &=& u_0'(1) + g(1)u_0(0) + \int_0^1 f(1, y) u_0(y) dy \,, \label{eq:hyperbolicCompat2} 
\end{eqnarray}
using Corollary 2.3
in \cite{iasson}, we guarantee that the PDE system \eqref{eq:hyperbolicSys1}, \eqref{eq:hyperbolicSys2} with initial condition
\begin{eqnarray}
    u[0]=u_0 \label{iasson1} \ 
\end{eqnarray}
has a unique classical solution defined for all $t \geq 0$.

Moreover, in relation to Assumption \ref{assmp1}, we define
    \begin{eqnarray}
        M_g &=& \sup_{x \in [0, 1]}|g(x)|  \label{eq:hyperbolicAssum1}\,,\\
        M_f &=& \sup_{(x, y) \in \mathcal{T}}|f(x, y)| \,. \label{eq:hyperbolicAssum2}
    \end{eqnarray}

Next, we postulate the  observer
\begin{eqnarray}
    \hat{u}_t(x, t) &=& \hat{u}_x(x, t) + g(x)\hat{u}(0, t) \nonumber \\ && + \int_0^xf(x, y)\hat{u}(y, t)dy \nonumber \\ && + p_1(x)[u(0, t)-\hat{u}(0, t)] \,,\label{eq:hyperbolicObs1}\\ 
    \hat{u}(1, t) &=& U(t)\,,\label{eq:hyperbolicObs2}
\end{eqnarray}
%
%
with a (Dirichlet) boundary measurement \eqref{eq:hyperbolicSys3}
at $x=0$, and with the observer gain function $p_1(x)$ to be derived. As above, we require that $\hat{u}_0 \in C^1([0, 1])$ and that the following compatibility conditions
\begin{eqnarray}
    \hat{u}_0(1) &=& U(0) \,, \label{eq:hyperbolicObsCompat1}\\ 
    \dot{U}(0) &=& \hat{u}'_0(1) + g(1)\hat{u}_0(0) \nonumber \\ && \nonumber + \int_0^1 f(1, y) \hat{u}_0(y) dy \\ 
    && + p_1(1) \big[ u_0(0) - \hat{u}_0(0) \big] \label{eq:hyperbolicObsCompat2} \,,
\end{eqnarray}
are satisfied to ensure the system \eqref{eq:hyperbolicSys1}, \eqref{eq:hyperbolicSys2}, \eqref{eq:hyperbolicObs1}, \eqref{eq:hyperbolicObs2} with initial conditions \eqref{iasson1} and 
\begin{eqnarray}
    \hat{u}[0]=\hat{u}_0 \label{iasson2} \ 
\end{eqnarray}
has a unique classical solution defined for all $t \geq 0$. 

Further, let the state estimation error be denoted by 
\begin{eqnarray}
    \tilde{u}(x, t) = u(x, t) - \hat{u}(x, t)\,,
\end{eqnarray} 
resulting in the error system 
    \begin{eqnarray}
        \label{eq:hyperbolicErr1} \tilde{u}_t(x, t) &=& \tilde{u}_x(x, t) + g(x)\tilde{u}(0, t) \nonumber \\ &&+ \int_0^x f(x, y) \tilde{u}(y, t) dy \nonumber \\ && - p_1(x)\tilde{u}(0, t) \,,\\ \label{eq:hyperbolicErr2}
        \tilde{u}(1, t) &=& 0 \,.
    \end{eqnarray}
    Now, consider applying the backstepping transformation
    \begin{eqnarray}
        \tilde{u}(x, t) = \tilde{w}(x, t) - \int_0^x k(x, y) \tilde{w}(y, t) dy \,, \label{eq:hyperbolicBcskTrnsfm}
    \end{eqnarray}
    with the goal of ensuring that the state $\tilde{w}$ of the observer error target system be governed by the plain homogeneous transport PDE
    \begin{eqnarray}
        \tilde{w}_t(x, t) &=& \tilde{w}_x(x, t) \label{eq:hyperbolicErrTarget1} \,,\\ 
        \tilde{w}(1, t) &=& 0 \label{eq:hyperbolicErrTarget2}\,.
    \end{eqnarray}
    To transform \eqref{eq:hyperbolicErr1}, \eqref{eq:hyperbolicErr2} into this target system, $k(x, y)$ must  satisfiy the Goursat PDE
\begin{eqnarray}
    \label{eq:hyperbolicKernelCond1} k_y(x, y) + k_x(x, y) &=& f(x, y) \nonumber \\ && - \int_y^x  f(x, \eta) k(\eta, y) d \eta \nonumber\,,  \\ &&  \forall (x, y) \in \mathcal{T} \,, \\ 
    k(1, y) &=& 0, \hspace{5pt} \forall y \in [0, 1] \,,\label{eq:hyperbolicKernelCond2} 
\end{eqnarray}
and the observer gain $p_1(x)$ must be given by
\begin{eqnarray}
    p_1(x) = g(x) - k(x, 0)\,. \label{eq:hyperbolicObsGain}
\end{eqnarray}
The following result certifies that the kernel PDE $k(x, y)$ is bounded.

\begin{theorem} \label{thm:kernelBound}
    (proven in \cite{BERNARD20142692}) For every $f \in C^1(\mathcal{T})$ and $g \in C^1([0, 1])$ the PDE problem \eqref{eq:hyperbolicKernelCond1}, \eqref{eq:hyperbolicKernelCond2}, has a unique $C^1(\mathcal{T})$ solution with the bound 
    \begin{eqnarray}
        |k(x, y)| \leq M_f(1-x) e^{M_f(x-y)(1-x)} \,,\hspace{10pt} \forall (x, y) \in \mathcal{T}\,.\label{eq:hyperbolicKernelBound}
    \end{eqnarray}
\end{theorem}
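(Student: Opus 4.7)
The plan is to convert the Goursat problem \eqref{eq:hyperbolicKernelCond1}--\eqref{eq:hyperbolicKernelCond2} into an equivalent Volterra-type integral equation and then carry out a successive-approximation argument to obtain existence, uniqueness, $C^1$ regularity, and the stated pointwise bound \eqref{eq:hyperbolicKernelBound}.

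First, I would pass to characteristic coordinates $s = x - y$ and $\tau = x$, which are adapted to the operator $\partial_x + \partial_y$, and write $K(s,\tau) = k(\tau,\tau-s)$. Since $\partial_\tau K = k_x + k_y$, the PDE becomes $K_\tau(s,\tau) = f(\tau,\tau-s) - \int_{\tau-s}^{\tau} f(\tau,\eta)\,k(\eta,\tau-s)\,d\eta$, while the boundary condition $k(1,\cdot)=0$ reduces to $K(s,1) = 0$. Integrating back from $\tau$ to $1$ along each characteristic and returning to $(x,y)$ variables yields the Volterra integral equation
\[
k(x,y) = -\int_x^1 f(\sigma,\sigma-(x-y))\,d\sigma + \int_x^1\!\!\int_{\sigma-(x-y)}^{\sigma} f(\sigma,\eta)\,k(\eta,\sigma-(x-y))\,d\eta\,d\sigma,
\]
which is equivalent to the Goursat problem within $C^1(\mathcal{T})$.

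Next, I would launch Picard iteration from $k^{(0)} \equiv 0$, defining $k^{(n+1)}$ by substituting $k^{(n)}$ into the right-hand side above. Continuity of $f$ on the compact triangle $\mathcal{T}$ ensures $k^{(n)} \in C(\mathcal{T})$ for every $n$, and an induction on $n$ produces increment estimates $|k^{(n+1)}-k^{(n)}|$ whose infinite sum is dominated by $M_f(1-x)\,\bigl(e^{M_f(x-y)(1-x)}-1\bigr)$. This gives absolute and uniform convergence of the Picard sequence on $\mathcal{T}$ to a fixed point $k \in C(\mathcal{T})$ that satisfies both the integral equation and the bound \eqref{eq:hyperbolicKernelBound}. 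Uniqueness is then immediate: the difference of two candidate solutions satisfies a homogeneous Volterra equation, and the same inductive estimate with vanishing zeroth term forces this difference to be identically zero on $\mathcal{T}$.

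Finally, $C^1$ regularity is obtained by differentiating the integral equation directly: because $f \in C^1(\mathcal{T})$ and $k$ is bounded and continuous, the partial derivatives $k_x$ and $k_y$ exist, are continuous on $\mathcal{T}$, and by construction recover \eqref{eq:hyperbolicKernelCond1}. The main obstacle is the bookkeeping in the inductive step: one must track both the combinatorial factor (a factorial arising from $n$-fold integration over a triangle shrinking in two independent directions) and the precise powers of $(1-x)$ and $(x-y)$, so that the majorizing series sums to exactly $M_f(1-x)\exp\bigl(M_f(x-y)(1-x)\bigr)$ rather than a coarser bound.
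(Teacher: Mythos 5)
Your overall strategy---characteristic coordinates $s=x-y$, $\tau=x$, reduction to a Volterra integral equation via the boundary data at $x=1$, and successive approximations---is exactly the method the paper attributes to the cited reference, and your integral equation is the correct reformulation of \eqref{eq:hyperbolicKernelCond1}, \eqref{eq:hyperbolicKernelCond2}. The genuine gap is in the step you yourself flag as ``bookkeeping'': for the equation as written, the Picard increments are \emph{not} majorized by the terms of $M_f(1-x)e^{M_f(x-y)(1-x)}$, and the induction does not close. Concretely, take $f\equiv M_f$ constant (admissible under Assumption~\ref{assmp1}). Then $k^{(1)}(x,y)=-M_f(1-x)$ exactly, and the second increment is
\[
k^{(2)}(x,y)-k^{(1)}(x,y)=-M_f^2\int_x^1\!\!\int_{\sigma-(x-y)}^{\sigma}(1-\eta)\,d\eta\,d\sigma=-\tfrac{1}{2}M_f^2\,(x-y)(1-x)(1-y),
\]
because the inner evaluation point $(\eta,\sigma-(x-y))$ has first coordinate $\eta$ ranging down to $y$, so the factor that propagates through the iteration is $1-y$, not $1-x$. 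The corresponding term of your proposed majorant series is $M_f^2(x-y)(1-x)^2$, which is strictly smaller whenever $x-y>1-x$. Worse, for constant positive $f$ every increment has the same (negative) sign, so there is no cancellation to rescue the sum: $M_f(1-x)\bigl(1+\tfrac{1}{2}M_f(x-y)(1-y)\bigr)$ is then a \emph{lower} bound on $|k(x,y)|$, and at $(x,y)=(0.9,0)$ with $M_f=1$ it gives $|k|\ge 0.145$, whereas the claimed bound \eqref{eq:hyperbolicKernelBound} evaluates to $0.1\,e^{0.09}\approx 0.109$.

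What your iteration does deliver is the weaker estimate $|k(x,y)|\le M_f(1-x)e^{M_f(x-y)(1-y)}$: the hypothesis $|k^{(n+1)}-k^{(n)}|\le M_f(1-x)\frac{\left(M_f(x-y)(1-y)\right)^{n}}{n!}$ closes, using $1-(\sigma-(x-y))\le 1-y$ on the inner integral and $a^{n+1}-b^{n+1}\le(n+1)a^{n}(a-b)$ on the outer one. By contrast, the stated bound with $(1-x)$ in the exponent falls out of your argument verbatim for the variant of the kernel equation whose nonlocal term is $\int_y^x k(x,\eta)f(\eta,y)\,d\eta$ (first argument of $k$ frozen at $x$, so the propagating factor is $1-\sigma\le 1-x$ and the outer integral contributes the needed $\frac{(1-x)^{n+1}}{n+1}$). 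So either the exponent in \eqref{eq:hyperbolicKernelBound} should read $(x-y)(1-y)$ for the equation as printed, or the arguments of $k$ and $f$ in the integrand of \eqref{eq:hyperbolicKernelCond1} are interchanged relative to the reference; in either case you must reconcile your majorant with the actual evaluation points $(\eta,\sigma-(x-y))$ before asserting the stated constant. The remaining parts of your proposal---uniqueness via the homogeneous Volterra equation and $C^1$ regularity by differentiating the integral equation using $f\in C^1(\mathcal{T})$---are sound.
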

\vspace{5pt} 

 The result is proven in Sec 6.1 of \cite{BERNARD20142692} using the method of successive approximations coupled with a change of variables. Following Theorem \eqref{thm:kernelBound}, we define
 \begin{eqnarray}
     \bar{k} = \sup_{(x, y) \in \mathcal{T}}|k(x, y)|
 \end{eqnarray}

Furthermore, recall the inverse backstepping kernel $l(x, y) \in C^1(\mathcal{T})$ and the inverse backstepping transformation as
\begin{eqnarray}
    \tilde{w}(x, t) = \tilde{u}(x, t) + \int_0^x l(x, y) \tilde{u}(y, t) dy \,. 
\end{eqnarray}
From Section 4.5 of \cite{krstic2008boundary}, the backstepping and inverse backstepping kernels satisfy 
\begin{eqnarray}
    l(x, y) &=& k(x, y) + \int_y^x k(x, \eta) l(\eta, y)  d y\,. \label{eq:hyperbolicKernelProp}
\end{eqnarray}
leading to the following bound on the inverse kernel
\begin{eqnarray}
    |l(x, y)| \leq \bar{k}  e^{\bar{k}}\,.
    \label{eq:hyperbolicInvBound}
\end{eqnarray}
Then, applying \eqref{eq:hyperbolicKernelBound} to $k(x, y)$ in \eqref{eq:hyperbolicInvBound} yields the following conservative bound
\begin{eqnarray}
    |l(x, y)| \leq \left(M_fe^{M_f} \right) e^{M_fe^{M_f}} \label{eq:hyperbolicInvBound1}\,.
\end{eqnarray} Similar to above, we then define
\begin{eqnarray}
    \bar{l} = \sup_\mathcal{T}|l(x, y)|
\end{eqnarray}

We now have the tools to present the first result for the observer in \eqref{eq:hyperbolicObs1}, \eqref{eq:hyperbolicObs2}.

\begin{theorem} \label{thm:hyperbolicErrConverges}
    Let Assumption \ref{assmp1} hold. Then for every $U \in C^1(\mathbb{R}^+)$, $u_0, \hat{u}_0 \in C^1([0, 1])$ for which the compatibility conditions in \eqref{eq:hyperbolicCompat1}, \eqref{eq:hyperbolicCompat2},
    \eqref{eq:hyperbolicObsCompat1}, \eqref{eq:hyperbolicObsCompat2} hold, the observer \eqref{eq:hyperbolicObs1}, \eqref{eq:hyperbolicObs2} 
ensures that the error between its state with initial condition \eqref{iasson2} and the state of the plant \eqref{eq:hyperbolicSys1}, \eqref{eq:hyperbolicSys2}, \eqref{eq:hyperbolicSys3} with initial condition \eqref{iasson1} satisfies for all $t \geq 0$ and $c>0$ the exponential stability bound
\begin{eqnarray}
    \|u[t] - \hat{u}[t]\| \leq Me^{-c t}\|u_0-\hat{u}_0\|\,,
\end{eqnarray}
where 
\begin{eqnarray}
    M = e^{c} (1 + \left(M_fe^{M_f} \right) e^{M_fe^{M_f}})(1 + M_fe^{M_f})\,,
\end{eqnarray}
Moreover, 
$\hat u(t) \equiv {u}(t)$ for $t\geq 1$, namely, the observer error is equal to $0$ for $t \geq 1$ and remains zero thereafter. 
\end{theorem}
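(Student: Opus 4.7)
The plan is to exploit the backstepping transformation \eqref{eq:hyperbolicBcskTrnsfm} and its inverse to move the entire analysis into the target system \eqref{eq:hyperbolicErrTarget1}, \eqref{eq:hyperbolicErrTarget2}, which is a pure leftward transport equation on $[0,1]$ and is therefore solvable in closed form. By the construction that led to \eqref{eq:hyperbolicKernelCond1}, \eqref{eq:hyperbolicKernelCond2} and \eqref{eq:hyperbolicObsGain}, $\tilde w$ satisfies this target system with $\tilde w[0]=\tilde w_0$ computed from $\tilde u_0=u_0-\hat u_0$ through the inverse transformation. I would only need to invoke (not re-derive) this equivalence.

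Next, I would solve the target PDE by characteristics: $\tilde w(x,t)=\tilde w_0(x+t)$ for $x+t\le 1$ and $\tilde w(x,t)=0$ for $x+t\ge 1$. This immediately gives the deadbeat conclusion: for every $t\ge 1$ and every $x\in[0,1]$ we have $\tilde w(x,t)=0$, and then the direct transformation \eqref{eq:hyperbolicBcskTrnsfm} forces $\tilde u(x,t)=0$, i.e.\ $\hat u[t]\equiv u[t]$ for $t\ge 1$.

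For the exponential bound, a short $L^2$ computation using the explicit solution yields $\|\tilde w[t]\|\le \|\tilde w_0\|$ for $t\in[0,1]$ and $\|\tilde w[t]\|=0$ for $t\ge 1$; since for any $c>0$ the inequality $1\le e^{c(1-t)}$ holds on $[0,1]$, both regimes are majorized by $\|\tilde w[t]\|\le e^{c}e^{-ct}\|\tilde w_0\|$. I would then translate this back to $\tilde u$ using the operator bounds
\begin{equation*}
\|\tilde u[t]\|\le (1+\bar k)\|\tilde w[t]\|,\qquad \|\tilde w_0\|\le (1+\bar l)\|\tilde u_0\|,
\end{equation*}
which follow from applying Cauchy--Schwarz and the embedding $L^2(0,1)\hookrightarrow L^1(0,1)$ to the Volterra integrals in \eqref{eq:hyperbolicBcskTrnsfm} and its inverse. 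Chaining these two estimates with the conservative kernel bounds $\bar k\le M_fe^{M_f}$ (from \eqref{eq:hyperbolicKernelBound} with $(1-x)\le 1$ and $(x-y)(1-x)\le 1$) and $\bar l\le (M_fe^{M_f})e^{M_fe^{M_f}}$ (from \eqref{eq:hyperbolicInvBound1}) reproduces exactly the constant $M$ in the theorem statement.

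The only mildly delicate step is the operator-norm estimate on the Volterra terms; once that is pinned down, everything else is bookkeeping. There is no Lyapunov inequality or Gronwall argument needed, because the finite-time convergence in the target system is stronger than any exponential decay and the freedom in choosing $c>0$ arises solely from embedding the trivial bound $\|\tilde w[t]\|\le\|\tilde w_0\|\cdot\mathbf{1}_{[0,1]}(t)$ into the family $e^{c}e^{-ct}$.
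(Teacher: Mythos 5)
Your proposal is correct and reaches the theorem's exact constant $M=e^{c}(1+\bar l)(1+\bar k)$ with $\bar k\le M_fe^{M_f}$ and $\bar l\le (M_fe^{M_f})e^{M_fe^{M_f}}$. The structural skeleton is the same as the paper's: pass to the target transport system via the backstepping transformation, estimate decay there, and transfer back through the Volterra operator bounds $\|\tilde u[t]\|\le(1+\bar k)\|\tilde w[t]\|$ and $\|\tilde w_0\|\le(1+\bar l)\|\tilde u_0\|$. Where you differ is in how the decay of $\|\tilde w[t]\|$ is obtained: the paper uses the weighted Lyapunov functional $V(t)=\tfrac12\int_0^1 e^{cx}\tilde w^2\,dx$ with $\dot V=-\tilde w^2(0,t)-2cV$, whereas you solve the target system by characteristics, observe $\|\tilde w[t]\|\le\|\tilde w_0\|$ on $[0,1]$ and $\tilde w[t]\equiv 0$ for $t\ge 1$, and then majorize this by $e^{c}e^{-ct}\|\tilde w_0\|$ for arbitrary $c>0$. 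Your route is more elementary and has the advantage of deriving the exponential bound and the deadbeat property from a single computation (the paper must separately invoke the explicit solution to get the $t\ge1$ claim anyway); it also makes transparent why $c>0$ is free, namely that the decay is really finite-time and the exponential envelope is an artifact of the chosen form of the bound. The Lyapunov route, by contrast, is the one that generalizes to target systems that are not explicitly solvable, which is presumably why the paper uses it. Your only deferred step, the Cauchy--Schwarz bound on the Volterra terms, is routine and closes without difficulty.
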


\begin{proof}
    First, note that the observer can be transformed into the target system $\tilde{w}$ as in \eqref{eq:hyperbolicErrTarget1}, \eqref{eq:hyperbolicErrTarget2} using the backstepping transformation in \eqref{eq:hyperbolicBcskTrnsfm}. We propose the Lyapunov functional for the $\tilde{w}$ system as
    \begin{eqnarray}
        V(t) = \frac{1}{2} \int_0^1 e^{cx} \tilde{w}^2 (x, t) dx, \qquad c> 0\,,
    \end{eqnarray}
    with derivative
    \begin{eqnarray}
        \dot{V} &=& -\tilde{w}^2(0, t) - c \int_0^1 e^{cx} \tilde{w}^2 (x, t) dx \\ &=& -\tilde{w}^2(0, t) - 2cV \,.
    \end{eqnarray}
    Note that $V$ satisfies 
    \begin{eqnarray}
        \frac{1}{\left(1+\bar{k}\right)^2}\|\tilde{u}\|^2 \leq V \leq e^{c} \left(1+\bar{l} \right)^2 \|\tilde{u}\|^2\,, \label{eq:hyperbolicVSandwich} 
    \end{eqnarray}
    and since $V(t) \leq V(0)e^{-2ct}$, plugging in for \eqref{eq:hyperbolicVSandwich} yields
    \begin{eqnarray}
        \|\tilde{u}[t]\| \leq e^{c(1-t)}(1+\bar{l})(1+\bar{k}) \|\tilde{u}[0]\|\,.
    \end{eqnarray}

    Lastly, note that by explicit calculation of \eqref{eq:hyperbolicErrTarget1}, \eqref{eq:hyperbolicErrTarget2}, we show that $\tilde{w}$ is $0$ for $t \geq 1$.  
\end{proof}

We have now established that the observer in \eqref{eq:hyperbolicObs1}, \eqref{eq:hyperbolicObs2} converges to the true system exponentially. However, in practice, it can be challenging to implement the PDE form of the observer and so we provide the observer's explicit moving-horizon form in the next result. We assume that the horizon $T$ is not shorter than one, i.e., $T\geq 1$. We do so for the sake of simplifying the expressions of the MHE and for avoiding the need to store the past values of the estimate $\hat u(x,t-T)$.

\begin{theorem} \label{thm:hyperbolicExplicit}
 Let Assumption 1 hold. Then for every $U \in C^1(\mathbb{R}^+)$, $u_0, \hat{u}_0 \in C^1([0, 1])$ for which the compatibility conditions in \eqref{eq:hyperbolicCompat1}, \eqref{eq:hyperbolicCompat2},
    \eqref{eq:hyperbolicObsCompat1}, \eqref{eq:hyperbolicObsCompat2} hold, the solution of 
\eqref{eq:hyperbolicSys1}, \eqref{eq:hyperbolicSys2}, \eqref{eq:hyperbolicSys3}, \eqref{eq:hyperbolicObs1}, \eqref{eq:hyperbolicObs2} with initial conditions \eqref{iasson1}, \eqref{iasson2} satisfies for all $t \geq 1$
\begin{eqnarray}
    \hat{u}(x, t) = \hat{w}(x, t) - \int_0^x k(x, y) \hat{w}(y, t) dy \,, \label{eq:hyperbolicObsBckTransfrom}\, \hspace{5pt} \forall x \in [0, 1]\,,
\end{eqnarray}
where $\hat{w}$ is given by the formula
\begin{eqnarray}
    \hat{w}(x, t) &=& \int_{t+x-1}^t \theta(t+x-\tau) Y(\tau) d\tau \nonumber \\ && + U(t+x-1) \,, \hspace{5pt} \forall x \in [0, 1]\,,
    \label{eq:hyperbolicExplicit1}  
\end{eqnarray}
 with $\theta$  defined as
\begin{eqnarray}
\label{eq:hyperbolicExplicit2}  
    \theta(x) &=& p_1(x) + \int_0^x l(x, y) p_1(y) dy  \,,
\end{eqnarray}
where $p_1(x)$ is defined by \eqref{eq:hyperbolicObsGain}.
\end{theorem}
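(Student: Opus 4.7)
The plan is to apply the same backstepping transformation \eqref{eq:hyperbolicBcskTrnsfm} that converted the error system into the target transport system, but now to the observer state $\hat u$ itself. This will yield a target equation for $\hat w$ that is forced (rather than homogeneous), and I can then integrate that forced transport equation explicitly along characteristics.

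First, I would rewrite the observer dynamics \eqref{eq:hyperbolicObs1} by splitting the injection as $p_1(x)[u(0,t)-\hat u(0,t)] = p_1(x)\,Y(t) - p_1(x)\,\hat u(0,t)$ and using $p_1(x) = g(x) - k(x,0)$ to obtain
\begin{equation*}
\hat u_t = \hat u_x + k(x,0)\,\hat u(0,t) + \int_0^x f(x,y)\,\hat u(y,t)\,dy + p_1(x)\,Y(t).
\end{equation*}
Apart from the exogenous term $p_1(x) Y(t)$, this has exactly the structure that the Goursat kernel $k$ was designed to map to $\tilde w_t = \tilde w_x$ in the error analysis. Substituting $\hat u = \hat w - \int_0^x k(x,y)\,\hat w(y,t)\,dy$ into the above and using \eqref{eq:hyperbolicKernelCond1}--\eqref{eq:hyperbolicKernelCond2} to cancel all of the reaction/integral terms produces the intertwining identity $p_1(x) Y(t) = \Phi(x,t) - \int_0^x k(x,y)\Phi(y,t)\,dy$, where $\Phi$ is the effective forcing acting on the $\hat w$-equation. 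Inverting this Volterra relation by the inverse kernel $l$ and using \eqref{eq:hyperbolicKernelProp} yields $\Phi(x,t) = [p_1(x) + \int_0^x l(x,y)p_1(y)\,dy]\,Y(t) = \theta(x)\,Y(t)$, with $\theta$ as in \eqref{eq:hyperbolicExplicit2}. The boundary condition $\hat w(1,t) = U(t)$ follows from $k(1,\cdot) \equiv 0$. Hence $\hat w$ satisfies the target system
\begin{align*}
\hat w_t(x,t) &= \hat w_x(x,t) + \theta(x)\,Y(t), \\
\hat w(1,t) &= U(t).
\end{align*}

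Second, I would integrate this target PDE along its characteristics $t + x = \text{const}$, which travel from the controlled boundary $x = 1$ into the interior. Starting at the intersection of the characteristic through $(x,t)$ with the boundary, namely $(1,t+x-1)$, and integrating the ODE $\frac{d}{ds}\hat w(1-s,\,t+x-1+s) = \theta(1-s)\,Y(t+x-1+s)$ from $s=0$ to $s=1-x$, a change of variable $\tau = t+x-1+s$ produces exactly \eqref{eq:hyperbolicExplicit1}. The hypothesis $t \ge 1$ ensures $t+x-1 \ge 0$ for every $x \in [0,1]$, so the characteristic reaches the boundary at a nonnegative time and the initial datum $\hat u_0$ is never invoked. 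Plugging this $\hat w$ back into \eqref{eq:hyperbolicBcskTrnsfm} yields \eqref{eq:hyperbolicObsBckTransfrom}.

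The main obstacle is the first step: correctly identifying the forcing in the target system as $\theta(x) Y(t)$ rather than $p_1(x) Y(t)$. The subtlety is that the Volterra backstepping transformation does not carry pointwise inhomogeneities through unchanged; the effective forcing in the $\hat w$ equation is the image of $p_1(x) Y(t)$ under $I + \mathcal{L}$, where $\mathcal{L}$ is the inverse Volterra operator with kernel $l$. Once this intertwining is correctly accounted for, via the kernel identity \eqref{eq:hyperbolicKernelProp}, the remainder of the proof---solving a forced linear transport equation by characteristics and exploiting $T \ge 1$ to eliminate dependence on initial data---is routine.
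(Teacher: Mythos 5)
Your proof is correct and follows essentially the same route as the paper: transform the observer into the forced transport target system $\hat w_t = \hat w_x + \theta(x)Y(t)$, $\hat w(1,t)=U(t)$, and solve it explicitly along characteristics, with $t\geq 1$ washing out the initial data. The only cosmetic difference is that you substitute the direct transformation and invert the Volterra relation to identify the forcing $\theta(x)Y(t)$, whereas the paper applies the inverse transformation $\hat w = \hat u + \int_0^x l(x,y)\hat u(y,t)\,dy$ directly and cites the reference for the characteristic solution you carry out by hand.
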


Note that Theorem \ref{thm:hyperbolicExplicit} allows us to use \eqref{eq:hyperbolicObsBckTransfrom} as an \textit{explicit moving-horizon estimator}. Furthermore, in \eqref{eq:hyperbolicExplicit2}, the functions $\theta(x)$ and $p_1(x)$  are in an inverse backstepping relationship, similar to the states of the $\hat{w}$ and $\hat{u}$ systems. 

\begin{proof}
    First, note that the observer system in \eqref{eq:hyperbolicObs1}, \eqref{eq:hyperbolicObs2} under the inverse backstepping transform
    \begin{eqnarray}
        \hat{w}(x, t) = \hat{u}(x, t) + \int_0^x l(x, y) \hat{u}(y, t) dy 
    \end{eqnarray}
    yields
    \begin{eqnarray}
    \label{simple-hyper1}
        \hat{w}_t(x, t) &=& \hat{w}_x(x, t) + \theta(x) u(0, t) \\ 
            \label{simple-hyper0}
        \hat{w}(1, t) &=& U(t) 
    \end{eqnarray}
    which is a transport PDE with a source term. The resulting PDE is solved explicitly as in \cite{BERNARD20142692} yielding the desired result. 
\end{proof}

Note that the moving-horizon estimator in 
\eqref{eq:hyperbolicObsBckTransfrom}, \eqref{eq:hyperbolicExplicit1}
 does not depend on past state estimates $\hat u(x,t-T)$, where $T>0$, at the start $t-T$ of the horizon, but just on the input and output signals for the PDE over the window $[t-1,t]$. This is due to the property of the transport PDE with a unity propagation speed that the effect of its initial condition "washes out" of the domain in the time interval $[0,1]$ and is zero for $t\geq 1$.

\section{1D Parabolic PDE} \label{sec:parabolic}
We now consider parabolic PDE systems of the form of
\begin{eqnarray}
    \label{eq:parabolicSys1} u_t(x, t) &=& u_{xx}(x, t) + \lambda(x) u(x, t), \hspace{5pt} \nonumber  \\ && x \in (0, 1)\,,  \\ 
    \label{eq:parabolicSys2} u(0, t) &=& 0 \,, \\ 
    \label{eq:parabolicSys3} u(1, t) &=& U(t) \,.
\end{eqnarray}
under the following assumption.

\begin{assumption} \label{assump2}
    $\lambda \in C^1([0, 1])$.
\end{assumption}

Moreover, in relation to Assumption \ref{assump2}, we define 
\begin{eqnarray}
    \bar{\lambda} = \sup_{x \in [0, 1]}|\lambda(x)|\,.
\end{eqnarray}

Assumption 2 guarantees that for every $u_0 \in H^2(0, 1)$ and every $U \in C^2(\mathbb{R}^+)$ with 
 \begin{eqnarray}
        u_0(0)&=&0  \label{iasson3}\,, \\
        u_0(1)&=&U(0) \label{iasson4}\,,
\end{eqnarray}
there exists a unique solution of the initial-boundary value problem \eqref{eq:parabolicSys1}, \eqref{eq:parabolicSys2}, \eqref{eq:parabolicSys3} with initial condition given by \eqref{iasson1}, which is defined for all $t \geq 0$ (see results for parabolic PDEs in the book \cite{evans10}). 

We study \eqref{eq:parabolicSys1}, \eqref{eq:parabolicSys2}, \eqref{eq:parabolicSys3} with (Neumann) boundary measurement 
    \begin{eqnarray}
        Y(t) = u_x(1, t) \,,
    \end{eqnarray}
    at $x=1$.

    Next we continue again by postulating an observer of the form 
    \begin{eqnarray}
        \hat{u}_t(x, t) &=& \hat{u}_{xx}(x, t) + \lambda(x) \hat{u}(x, t) \nonumber \\ && + p_1(x)\big[u_x(1, t) - \hat{u}_x(1, t)] \label{eq:parabolicObs1}\,, \\
        \hat{u}(0, t) &=& 0 \,,\label{eq:parabolicObs2} \\ 
        \hat{u}(1, t) &=& U(t) \label{eq:parabolicObs3}\,, 
    \end{eqnarray}
and with the observer gain function $p_1(x)$ to be derived. As above, we require that $\hat{u}_0 \in H^2(0, 1)$ and that the following compatibility conditions    
 \begin{eqnarray}
        \hat{u}_0(0)&=&0 \label{iasson6}\,, \\
        \hat{u}_0(1)&=&U(0) \label{iasson7}\,, 
    \end{eqnarray}
are satisfied to ensure the system \eqref{eq:parabolicSys1}, \eqref{eq:parabolicSys2}, \eqref{eq:parabolicSys3}, \eqref{eq:parabolicObs1}, \eqref{eq:parabolicObs2}, \eqref{eq:parabolicObs3} with initial conditions \eqref{iasson1} and \eqref{iasson2}
has a unique classical solution defined for all $t \geq 0$ (here a backstepping argument is first used in order to eliminate the nonlocal terms that appear on the right hand side of \eqref{eq:parabolicObs1}). 

    We need to choose $p_1(x)$ such that the error system is stable. As above, the error system can be written as
    \begin{eqnarray}
        \tilde{u}_t(x, t) &=& \tilde{u}_{xx}(x, t) + \lambda(x) \tilde{u}(x, t) \nonumber \\ && - p_1(x) \tilde{u}_x(1, t) \label{eq:parabolicError1}\,, \\ 
        \tilde{u}(0, t) &=& 0 \label{eq:parabolicError2}\,, \\
        \tilde{u}(1, t) &=& 0 \label{eq:parabolicError3}\,.
    \end{eqnarray}
    As the hyperbolic PDE system, our goal is to transform the system \eqref{eq:parabolicError1}, \eqref{eq:parabolicError2},
    \eqref{eq:parabolicError3} into the following target system
    \begin{eqnarray}
        \tilde{w}_t(x, t) &=& \tilde{w}_{xx}(x, t)\,,  \label{eq:parabolicWError1} \\ 
        \tilde{w}(0, t) &=& 0 \,, \label{eq:parabolicWError2}\\ 
        \tilde{w}(1, t) &=& 0 \,,\label{eq:parabolicWError3}
    \end{eqnarray}
    through the backstepping transform
    \begin{eqnarray}
        \tilde{u}(x, t) = \tilde{w}(x, t) - \int_x^1 k(x,y) \tilde{w}(y, t) dy. 
    \end{eqnarray}
    To achieve the transformation of \eqref{eq:parabolicError1}, \eqref{eq:parabolicError2},
    \eqref{eq:parabolicError3} into  
    \eqref{eq:parabolicWError1},
    \eqref{eq:parabolicWError2},
    \eqref{eq:parabolicWError3}, the kernel PDE $k(x, y)$ must satisfy the Goursat PDE
    \begin{eqnarray}
        \label{eq:parabolicKernelCond1} k_{xx}(x, y) - k_{yy}(x, y) &=& \lambda(y) k(x, y), \nonumber \hspace{5pt} \\ &&  \forall (x, y) \in \mathcal{T},\\ 
        \label{eq:parabolicKernelCond2}k(x, 0) &=& 0, \hspace{10pt} \forall x \in [0, 1] \,,\\\ 
       \label{eq:parabolicKernelCond3} k(x, x) &=& -\frac{1}{2} \int_0^x \lambda(y) dy, \hspace{5pt} \nonumber \\ && \forall x \in [0, 1] \,,
    \end{eqnarray}
    where $\mathcal{T} = \{(x, y) \in \mathbb{R}^2 ; 0 \leq y \leq x \leq 1 \}$ is the same as above. Furthermore,  the observer gain $p_1(x)$ must be
    \begin{eqnarray}
        p_1(x) = k(1, x)\,.
    \end{eqnarray}
    As above, the resulting kernel PDE $k(x, y)$ is bounded. 

\begin{theorem}
    (proven in \cite{1369395}, \cite{Smyshlyaev2010}) For every $\lambda \in C^1([0, 1])$, the PDE problem   \eqref{eq:parabolicKernelCond1}, \eqref{eq:parabolicKernelCond2}, \eqref{eq:parabolicKernelCond3} has a unique $\text{C}^2(\mathcal{T})$ solution with the bound
    \begin{eqnarray}
        |k(x, y)| \leq \bar{\lambda}e^{2\bar{\lambda}x} \,,\label{eq:parabolicKernelBound}
    \end{eqnarray}
    for all $x \in [0, 1]$.
\end{theorem}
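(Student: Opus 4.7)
The plan is to convert the Goursat system into an integral equation and solve it by successive approximations. First I would introduce characteristic coordinates $\xi=x+y$, $\eta=x-y$ and define $G(\xi,\eta)=k\bigl((\xi+\eta)/2,(\xi-\eta)/2\bigr)$. Since $k_{xx}-k_{yy}=4G_{\xi\eta}$, the Goursat problem \eqref{eq:parabolicKernelCond1}--\eqref{eq:parabolicKernelCond3} takes the canonical form
\begin{eqnarray*}
G_{\xi\eta}(\xi,\eta) &=& \tfrac{1}{4}\lambda\bigl((\xi-\eta)/2\bigr)\,G(\xi,\eta), \\
G(\eta,\eta) &=& 0, \\
G(\xi,0) &=& -\tfrac{1}{2}\int_0^{\xi/2}\lambda(s)\,ds,
\end{eqnarray*}
on the image triangle $\{(\xi,\eta):\,0\leq \eta\leq\xi,\ \xi+\eta\leq 2\}$. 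Consistency at the corner $(\xi,\eta)=(0,0)$ is automatic since both boundary conditions give $G(0,0)=0$.

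Next I would integrate the PDE in $\eta$ from $0$ to $\eta$, using $G_\xi(\xi,0)=-\tfrac{1}{4}\lambda(\xi/2)$ (obtained by differentiating the Dirichlet data on $\eta=0$), and then in $\xi$ from $\eta$ to $\xi$, using $G(\eta,\eta)=0$. This yields the equivalent Volterra-type integral equation
\begin{eqnarray*}
G(\xi,\eta) &=& -\tfrac{1}{4}\int_\eta^\xi \lambda(s/2)\,ds \\
&& + \tfrac{1}{4}\int_\eta^\xi\!\int_0^\eta \lambda\bigl((s-\tau)/2\bigr)\,G(s,\tau)\,d\tau\,ds.
\end{eqnarray*}
I would then set up Picard iterates with $G_0$ equal to the forcing term and $G_{n+1}$ equal to the right-hand side with $G_n$ substituted under the double integral.

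The next step is to show by induction that the differences $\Delta G_n = G_n - G_{n-1}$ obey a Bessel-type estimate of the form $|\Delta G_n(\xi,\eta)|\leq C\,\bar\lambda^{\,n+1}(\xi\eta)^n/(n!)^2$. Uniform convergence of $\sum_n \Delta G_n$ on the triangle produces a continuous solution $G$; differentiating the integral equation twice, justified by $\lambda\in C^1$ together with the uniform convergence, establishes $C^2$-regularity; uniqueness follows by applying the same Picard estimate to the difference of two candidate solutions. Finally, pulling the estimate back via $x=(\xi+\eta)/2$, $y=(\xi-\eta)/2$ and using $\sqrt{\xi\eta}\leq(\xi+\eta)/2=x$, the Bessel-type series is majorized by $\bar\lambda\,e^{2\bar\lambda x}$, which is precisely \eqref{eq:parabolicKernelBound}.

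The principal obstacle is the combinatorial bookkeeping in the Picard estimate: one must check that each pair of nested integrations over $[\eta,\xi]$ and $[0,\eta]$ contributes a factor roughly $\xi\eta/(n+1)^2$, so that the accumulated $n$-fold iteration produces $(\xi\eta)^n/(n!)^2$. The reason this matters is that, after summation, one needs the series to collapse into an exponential in $\xi+\eta$ alone (equivalently in $x$), not in each characteristic variable separately; otherwise the resulting bound would be weaker than \eqref{eq:parabolicKernelBound}. Once this careful estimate is in place, existence, uniqueness, $C^2$-regularity, and the exponential bound all follow routinely.
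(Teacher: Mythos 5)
Your proposal is correct and follows essentially the same route as the proof the paper outsources to \cite{1369395} and \cite{Smyshlyaev2010}: pass to the characteristic variables $\xi=x+y$, $\eta=x-y$, convert the Goursat problem into a Volterra-type integral equation, and run successive approximations with the $(\xi\eta)^n/(n!)^2$ (Bessel-type) estimates to get existence, uniqueness, $C^2$ regularity, and the exponential bound via $\sqrt{\xi\eta}\le x$. The coordinate change, the boundary data on $\eta=0$ and on the diagonal, and the iteration estimates all check out, so this is a faithful reconstruction of the cited argument rather than a new one.
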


Additionally, recall the inverse backstepping kernel $l(x, y) \in C^2(\mathcal{T})$ and inverse backstepping transformation 
\begin{eqnarray}
    \tilde{w}(x, t) &=& \tilde{u}(x, t) + \int_x^1 l(x, y) \tilde{u}(y, t) dy\,,
\end{eqnarray}
which similarly satisfies \eqref{eq:hyperbolicKernelProp}. Then, the inverse kernel $l(x, y)$ is bounded as
\begin{eqnarray}
     |l(x, y)| \leq (\bar{\lambda} e^{2 \bar{\lambda}}) e^{\bar{\lambda} e^{2 \bar{\lambda}}} \label{eq:parabolicInvKernelBound} \,.
\end{eqnarray}
Now, we provide the error system convergence result for the observer in \eqref{eq:parabolicObs1}, \eqref{eq:parabolicObs2}, 
\eqref{eq:parabolicObs3}.

\begin{theorem} (proven in \cite{krstic2023neural})
    Let Assumption \ref{assump2} hold. Then for every $U \in C^2(\mathbb{R}^+)$, $u_0, \hat{u}_0 \in H^2(0, 1)$ for which the compatibility conditions \eqref{iasson3}, \eqref{iasson4}, \eqref{iasson6}, \eqref{iasson7} hold, the observer \eqref{eq:parabolicObs1}, \eqref{eq:parabolicObs2}, \eqref{eq:parabolicObs3} ensures that the error between its state with initial condition \eqref{iasson2} and the state of the plant  \eqref{eq:parabolicSys1}, \eqref{eq:parabolicSys2}, \eqref{eq:parabolicSys3} with initial condition \eqref{iasson1} satisfies for all $t \geq 0$ the exponential stability bound
\begin{eqnarray}
    \|u[t]-\hat{u}[t]\| \leq Me^{-\pi^2 t}\|u_0 - \hat{u}_0\|, \hspace{5pt} \forall t \geq 0 \,,
\end{eqnarray}
where 
\begin{eqnarray}\label{eq-overshootM}
    M = \left(1+\bar{\lambda}e^{2 \bar{\lambda}} \right) \left(1 + \bar{\lambda}e^{2\bar{\lambda}} \right) e^{\bar{\lambda}e^{2 \bar{\lambda}}}.
\end{eqnarray}
Furthermore, over a horizon of length $T$, the estimation error obeys the relation for all $t \geq T$
\begin{eqnarray}
    \|u[t] - \hat{u}[t]\| &\leq& Me^{-\pi^2 T}\|u[t-T] - \hat{u}[t-T]\| \,.
    \label{eq:parabolicContract}
\end{eqnarray}
\end{theorem}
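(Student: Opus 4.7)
The plan is to reduce the error system \eqref{eq:parabolicError1}--\eqref{eq:parabolicError3} to the plain heat equation via the already-constructed backstepping transformation, carry out a standard Lyapunov estimate on the target, and translate the resulting decay rate back to $\tilde u$ through the direct and inverse kernel bounds \eqref{eq:parabolicKernelBound} and \eqref{eq:parabolicInvKernelBound}. With the gain $p_1(x)=k(1,x)$ and $k$ solving the Goursat system \eqref{eq:parabolicKernelCond1}--\eqref{eq:parabolicKernelCond3}, the transformation $\tilde u(x,t)=\tilde w(x,t)-\int_x^1 k(x,y)\tilde w(y,t)\,dy$ maps the error system onto the pure heat equation \eqref{eq:parabolicWError1}--\eqref{eq:parabolicWError3}, so I would invoke this reduction first and then work entirely in the $\tilde w$ coordinates.

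On the target system I would take the standard Lyapunov functional
\begin{equation}
V(t) = \tfrac{1}{2}\int_0^1 \tilde w^2(x,t)\,dx,
\end{equation}
whose derivative along $\tilde w_t=\tilde w_{xx}$, after one integration by parts exploiting $\tilde w(0,t)=\tilde w(1,t)=0$, equals $-\int_0^1 \tilde w_x^2(x,t)\,dx$. The Poincar\'e--Wirtinger inequality $\int_0^1 \tilde w_x^2\,dx\geq \pi^2 \int_0^1 \tilde w^2\,dx$ for functions vanishing at both endpoints of $[0,1]$ then yields $\dot V \leq -2\pi^2 V$, and hence $\|\tilde w[t]\|\leq e^{-\pi^2 t}\|\tilde w[0]\|$. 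The sharp $\pi^2$ rate appearing in the statement is inherited directly from this constant.

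To recover the bound on $\tilde u$, I would apply Cauchy--Schwarz to the direct transformation together with \eqref{eq:parabolicKernelBound}, obtaining $\|\tilde u[t]\|\leq (1+\bar\lambda e^{2\bar\lambda})\|\tilde w[t]\|$, and symmetrically apply Cauchy--Schwarz to the inverse transformation with \eqref{eq:parabolicInvKernelBound} to get $\|\tilde w[0]\|\leq (1+\bar\lambda e^{2\bar\lambda} e^{\bar\lambda e^{2\bar\lambda}})\|\tilde u[0]\|$. Setting $\alpha=\bar\lambda e^{2\bar\lambda}$ and using the elementary inequality $1+\alpha e^\alpha \leq (1+\alpha)e^\alpha$ for $\alpha\geq 0$ collapses the product of these two prefactors into the $M$ displayed in \eqref{eq-overshootM}, completing the first bound. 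The horizon contraction \eqref{eq:parabolicContract} then follows at once from time invariance of the autonomous plant/observer pair: apply the bound on the shifted interval $[t-T,t]$, treating $t-T$ as the initial time in place of $0$.

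The only place that actually demands care is the last algebraic step: the prefactor in \eqref{eq-overshootM} is written in a particular factorised form, and showing that the natural product of the two Cauchy--Schwarz prefactors is indeed majorised by it requires the bookkeeping inequality above. Everything else is a textbook combination of PDE backstepping and the Poincar\'e inequality, so I expect no other obstacles.
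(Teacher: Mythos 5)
Your proposal is correct and follows exactly the route the paper intends: the paper defers this theorem's proof to a citation, but its own proof of the analogous hyperbolic result (Theorem \ref{thm:hyperbolicErrConverges}) has the identical structure---backstepping to the homogeneous target, a Lyapunov/Poincar\'e estimate giving the $e^{-\pi^2 t}$ decay, and the sandwich of direct and inverse kernel bounds producing the overshoot constant. Your bookkeeping with $\alpha=\bar{\lambda}e^{2\bar{\lambda}}$ and $1+\alpha e^{\alpha}\leq(1+\alpha)e^{\alpha}$ correctly recovers the factorised form of $M$ in \eqref{eq-overshootM}, and the shift-invariance argument for \eqref{eq:parabolicContract} is exactly right.
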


Note, that the relation \eqref{eq:parabolicContract}, which quantifies the progress in the state estimation over the horizon $[t-T,t]$, becomes a {\em contraction} when 
\begin{eqnarray}
    T > \frac{\ln(M)}{\pi^2}\,,
\end{eqnarray} 
for the overshoot coefficient $M$ given by \eqref{eq-overshootM} and dependent on the degree of plant's instability $\bar\lambda$, i.e., when the horizon is long enough in relation to the plant's level of instability. 

The next result is arguably the paper's main result: the explicit MHE for the reaction-diffusion PDE. We consider the following {\em explicit moving-horizon estimator} 
    \begin{align}
        \label{eq:parabolicUhatBckstep}
        \hat{u}(x, t) =& \hat{w}(x, t) - \int_x^1 k(y, x) 
          \hat{w}(y, t) dy 
\\  
    \label{eq:whatsolution}
    \hat w(x,t) 
    =& \sum_{n=1}^\infty \phi_n(x) \exp(-n^2 \pi^2 T) \nonumber \\ & \times \int_0^1 \hat{w}(y, t-T) \phi_n(y) d y \nonumber
    \\ & + \sum_{n=1}^\infty \left( \int_0^1 l(1, y) \phi_n(y) dy \right) \phi_n(x) \nonumber \\
    & \times \int_{t-T}^t \exp{\big(-n^2 \pi^2 (t-\tau) \nonumber \big)}\\ & \times \nonumber \big(Y(\tau)-k(1, 1)U(\tau)\big) d \tau
    \\ & -\pi \sqrt{2} \sum_{n=1}^\infty n \big(-1\big)^n \phi_n(x) \nonumber \\ & \times \int_{t-T}^t \exp{\big(-n^2 \pi^2 (t-\tau) \big)}U(\tau) d\tau
\\
\label{eq:whatinit}
            \hat w(x,t-T) =& \hat u(x,t-T) 
            +\int_x^1 l(y,x) \hat u(y,t-T) dy 
\end{align}
where $\phi_n(x) = \sqrt{2}\sin(n\pi x) $ is the $n$th eigenfunction of the underlying Sturm-Liouville operator for the heat equation with homogeneous Dirichlet boundary conditions. The moving-horizon estimator is well defined for $t \geq T$ and its properties are given by the following theorem. 
\begin{theorem}
Let Assumption 2 hold. Then for every $U \in C^2(\mathbb{R}^+)$, $u_0, \hat{u}_0 \in H^2(0, 1)$ for which the compatibility conditions in \eqref{iasson3}, \eqref{iasson4}, \eqref{iasson6}, \eqref{iasson7} hold, the solution of \eqref{eq:parabolicSys1}, \eqref{eq:parabolicSys2}, 
\eqref{eq:parabolicSys3},
\eqref{eq:parabolicObs1}, 
\eqref{eq:parabolicObs2}
with initial conditions \eqref{iasson1}, \eqref{iasson2} satisfies for all $t\geq T$ \eqref{eq:parabolicUhatBckstep} where $\hat{w}$ is given by \eqref{eq:whatsolution} and \eqref{eq:whatinit}.
\end{theorem}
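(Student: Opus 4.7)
The plan is to reduce the verification of the explicit formula to two well-separated steps: first, transform the observer PDE through the inverse backstepping map $\hat w(x,t)=\hat u(x,t)+\int_x^1 l(y,x)\hat u(y,t)\,dy$ and identify the resulting target PDE for $\hat w$; second, solve that target PDE in closed form over the horizon $[t-T,t]$ by Sturm--Liouville expansion in $\phi_n(x)=\sqrt{2}\sin(n\pi x)$. Since \eqref{eq:parabolicUhatBckstep} is exactly the direct backstepping map applied at time $t$ and \eqref{eq:whatinit} is the inverse map at time $t-T$, the theorem reduces to establishing that the series \eqref{eq:whatsolution} actually represents the solution of that target PDE.

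For the first step I would substitute the observer dynamics \eqref{eq:parabolicObs1}--\eqref{eq:parabolicObs3} into $\hat w_t-\hat w_{xx}$. Differentiating $\int_x^1 l(y,x)\hat u(y,t)\,dy$ twice in $x$ and once in $t$, using $\hat u_t=\hat u_{xx}+\lambda(x)\hat u+p_1(x)[Y(t)-\hat u_x(1,t)]$, and integrating by parts in $y$ should produce three things: a reactive contribution that annihilates $\lambda(x)\hat u$ by virtue of the kernel PDE \eqref{eq:parabolicKernelCond1}--\eqref{eq:parabolicKernelCond3}; a cancellation of the output-injection term $p_1(x)\hat u_x(1,t)$ via $p_1(x)=k(1,x)$ and the reciprocity \eqref{eq:hyperbolicKernelProp}; and an unabsorbed boundary residue at $y=1$ carrying exactly the measurement and the boundary input $U(t)$. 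I expect the target system to come out as
\begin{align*}
\hat w_t(x,t) &= \hat w_{xx}(x,t) + l(1,x)\bigl[Y(t)-k(1,1)U(t)\bigr], \\
\hat w(0,t) &= 0, \qquad \hat w(1,t) = U(t).
\end{align*}

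For the second step I would expand $\hat w(\cdot,\tau)$ against $\{\phi_n\}$ and integrate by parts twice in $x$, using $\phi_n(0)=\phi_n(1)=0$ and $\phi_n'(1)=\sqrt{2}\,n\pi(-1)^n$. The Fourier coefficient $\hat w_n(\tau)=\int_0^1 \hat w(y,\tau)\phi_n(y)\,dy$ then satisfies
\[ \dot{\hat w}_n(\tau) = -n^2\pi^2\hat w_n(\tau) + \Bigl(\int_0^1 l(1,y)\phi_n(y)\,dy\Bigr)\bigl[Y(\tau)-k(1,1)U(\tau)\bigr] - \sqrt{2}\, n\pi(-1)^n U(\tau), \]
and variation of constants from $\tau=t-T$ to $\tau=t$, with initial value $\hat w_n(t-T)=\int_0^1 \hat w(y,t-T)\phi_n(y)\,dy$ supplied by \eqref{eq:whatinit}, followed by summation in $n$, reproduces the three terms of \eqref{eq:whatsolution} in their stated order. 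The main obstacle will be the first step: tracking the integration-by-parts boundary contributions carefully enough to obtain the precise output-injection source $l(1,x)[Y(t)-k(1,1)U(t)]$. The $k(1,1)U(t)$ correction arises from evaluating the integrand at $y=1$ through the Dirichlet condition $\hat u(1,t)=U(t)$, and the prefactor $l(1,x)$ arises from the $y=1$ endpoint generated when integrating the diffusive term $\hat u_{xx}$ by parts against $l(y,x)$; once that identification is secured, the remaining derivation is standard.
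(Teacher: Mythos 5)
Your proposal is correct and follows essentially the same route as the paper: the inverse backstepping transform carries the observer \eqref{eq:parabolicObs1}--\eqref{eq:parabolicObs3} to exactly the driven heat equation \eqref{eq:parabolicTarget1}--\eqref{eq:parabolicTarget3} that the paper uses as its intermediate target, and your Sturm--Liouville expansion with $\phi_n'(1)=\sqrt{2}\,n\pi(-1)^n$ and variation of constants reproduces the three terms of \eqref{eq:whatsolution} term by term. The only difference is one of presentation: the paper asserts the equivalence of \eqref{eq:whatsolution}, \eqref{eq:whatinit} with the target system by citation and argues from the explicit formula back to the observer, whereas you derive the same chain of equivalences in the forward direction and fill in the eigenfunction computation explicitly.
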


\begin{proof}
    Note, from \cite{iasson}, the explicit form of  \eqref{eq:whatsolution}, \eqref{eq:whatinit} is equivalent to the  system
    \begin{eqnarray}
        \label{eq:parabolicTarget1} \hat{w}_t &=& \hat{w}_{xx} + l(1, x) \bigg[ u_x(1, t) - k(1, 1)U(t) \bigg] \\ 
        \label{eq:parabolicTarget2} \hat{w}(0, t) &=& 0 \\ 
        \label{eq:parabolicTarget3} \hat{w}(1, t) &=& U(t) 
    \end{eqnarray}
    on the interval $[t-T, t]$. 
    The system \eqref{eq:parabolicTarget1}, \eqref{eq:parabolicTarget2}, \eqref{eq:parabolicTarget3} is transformed into the following observer system after the application of the inverse backstepping transformation as 
    \begin{eqnarray}
        \hat{w}(x, t) = \hat{u}(x, t) + \int_x^1 l(y, x) \hat{u}(y, t) dy 
    \end{eqnarray}
    resulting in the observer \eqref{eq:parabolicObs1}, \eqref{eq:parabolicObs2}, \eqref{eq:parabolicObs3}
   and obtaining the asserted result.
\end{proof}

\begin{figure}[ht]
    \centering
    \includegraphics{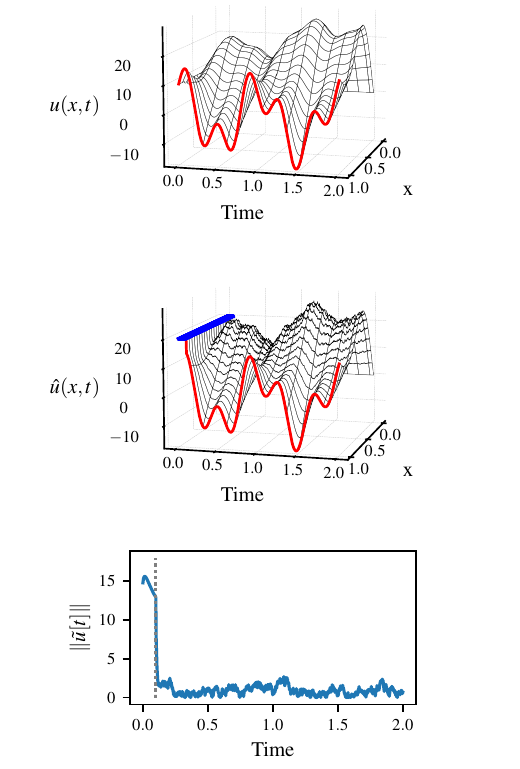}
    \vspace{-2em}
    \caption{The top row shows system \eqref{eq:parabolicSys1}, \eqref{eq:parabolicSys2}, \eqref{eq:parabolicSys3} with $u_0(x) \equiv 10$ and input $U(t) = 10\cos{(2 \pi t)} + 7 \sin{(16t)}$, which creates an unstable and input-excited system. The second row represents the observer in the explicit MHE form \eqref{eq:parabolicUhatBckstep}, \eqref{eq:whatsolution}, \eqref{eq:whatinit}, with the sliding window as $T=0.1$ and the infinite series in \eqref{eq:whatsolution} truncated to $N=4$ terms. We add Gaussian measurement noise on the boundary measurement in the form of $\text{Normal}(0, 500)$. The large noise is handled due the frequency of measurements available when using a time step of $1e-6$. In practice, one may not be able to handle such a large deviation with a less rapid measurement frequency. The initial condition for the observer for $t \leq T$ is $\hat{u}(x, \theta) = 20,$ $\theta \in [0, T]$. The bottom row showcases the $L^2$ error between the observer system and the true system, $\|\tilde u[t]\|_{L^2}$, which decreases precipitously at $t=0.1$ (shown by the grey dotted line), when the observer estimate begins to engage.}
    \label{fig:parabolicSimulations}
    \vspace{-1.5em}
\end{figure}

\section{Simulation of Parabolic MHE} \label{sec:simParabolic}
 We consider the system in \eqref{eq:parabolicSys1}, \eqref{eq:parabolicSys2}, \eqref{eq:parabolicSys3} with $\lambda(x) = 21 \cos{(5 \cos^{-1}(x))}$ as a Chebyshev polynomial. Additionally, we have an input controller $U(t) = 10\cos(2\pi t) + 7 \sin(16 t)$ which creates a slowly unstable system as shown in the top row of Figure \ref{fig:parabolicSimulations}. For computation, we choose a horizon window of $T = 0.1$ over $2$ total seconds of simulation time and set $dt =1e-6$ and $dx = 0.001$ as the temporal and spatial step sizes respectively. To simulate the true PDE system. we employ a traditional forward Euler finite difference scheme as in \cite{krstic2023neural}. We truncate the infinite series in \eqref{eq:whatsolution} $4$ terms which can be precomputed offline. The \textit{only} online computation are the discrete integrals in \eqref{eq:whatsolution} which depend on the horizon interval $T$ and can be computed recursively from the previous timestep. 

    In the second row of Figure \ref{fig:parabolicSimulations}, we chose for the  observer to have an initial condition of $\hat{u}(x, \theta)=20, \theta \in [0, T]$ for the initial horizon window and add Gaussian noise on the measurement of $u_x(1, t)$. Despite the noise,  the bottom of Figure \ref{fig:parabolicSimulations} shows the $L_2$ error quickly decreases once the estimate engages at $t=0.1$ and that the observer is able to emulate the true state with good accuracy. We note, without noise, the observer would exponentially converge to the true state immediately after engaging.

\section{Conclusion} \label{sec:conclusion}
In this work, we present the first set of \textit{moving horizon estimators} for both first-order hyperbolic and reaction-diffusion PDEs. The MHEs replace the need to employ burdensome numerical schemes for solving PDE observers with explicit forms. It is through the power of a backstepping transformation that establishes the equivalence between a complicated unsolvable PDE and a simple, explicitly solvable (target) PDE. 
To the authors' knowledge, this is the first work in designing, for PDE systems, moving-horizon estimators, which are well developed for discrete-time models, in the context of MPC \cite{rawlings:mayne:diehl:2020,Rawlings2003}. 
It is still unclear what the full potential of the PDE backstepping-based MHEs is, but one suggestion, along the lines of \cite{krstic2023neural, bhan2023neural, pmlr-v211-bhan23a, 9992759}, is to approximate the MHE operators with neural networks. The finite-horizon nature of the MHEs makes them suitable for a deep learning representation in which the input functions (the measured output and control input signals, as well as the state estimate from the start of the horizon) are functions on constant intervals whose lengths are constant (and, in particular, do not grow unbounded). The authors leave this topic as an exciting possibility for future work.

\bibliography{references}
\bibliographystyle{abbrv}

\end{document}